\def\defn#1{\textbf{\textit{\boldmath #1}}}
\newcommand*\samethanks[1][\value{footnote}]{\footnotemark[#1]}
 \gdef\xxxmark{%
   \expandafter\ifx\csname @mpargs\endcsname\relax 
     \expandafter\ifx\csname @captype\endcsname\relax 
       \marginpar{xxx}
     \else
       xxx 
     \fi
   \else
     xxx 
   \fi}
 \gdef\xxx{\@ifnextchar[\xxx@lab\xxx@nolab}
 \long\gdef\xxx@lab[#1]#2{\textbf{[\xxxmark #2 ---{\sc #1}]}}
 \long\gdef\xxx@nolab#1{\textbf{[\xxxmark #1]}}
\title{Edge-Unfolding Prismatoids: Tall or Rectangular Base}
\author{Vincent Bian\thanks{MIT Department of Mathematics, Cambridge, MA, USA {\tt \{vinvinb,rachanam\}@mit.edu}}
        \and
        Erik D. Demaine\thanks{MIT Computer Science and Artificial Intelligence Laboratory,  Cambridge, MA, USA, {\tt  edemaine@mit.edu}}
        \and
        Rachana Madhukara\samethanks[1]}
\begin{document}
\thispagestyle{empty}
\maketitle

\begin{abstract}
We show how to edge-unfold a new class of convex polyhedra, specifically a new class of prismatoids (the convex hull of two parallel convex polygons, called the top and base), by constructing a nonoverlapping ``petal unfolding'' in two new cases: (1)~when the top and base are sufficiently far from each other; and (2)~when the base is a rectangle and all other faces are nonobtuse triangles. The latter result extends a previous result by O'Rourke that the petal unfolding of a prismatoid avoids overlap when the base is a triangle (possibly obtuse) and all other faces are nonobtuse triangles. We also illustrate the difficulty of extending this result to a general quadrilateral base by giving a counterexample to our technique.
\end{abstract}

\section{Introduction}
A famous open problem known as D\"{u}rer's problem
\cite[Open Problem 21.11, p.~298]{demaine2007geometric}
asks whether every convex polyhedron has an \defn{edge unfolding}, that is,
a set of edges to cut such that the remaining surface unfolds into the plane
without overlap.
Despite the simple statement of the problem, a solution remains elusive.
One approach to making partial progress on this problem is to prove that
special classes of convex polyhedra have edge unfoldings.

One of the simplest yet still-open cases is \defn{prismatoids}, defined as
the convex hull of two parallel convex polygons, called the \defn{top} and
\defn{base} (bottom).
Aloupis \cite{Aloupis-thesis} showed that, if we omit the top and base,
the resulting ``band'' of side faces has an edge unfolding.
The challenge is thus to place the top and base without overlap;
indeed, O'Rourke \cite{BandCounterexample} showed that it is impossible to
simply attach these polygons to an unfolded band without overlap
(a ``band unfolding'').

A simpler goal is to unfold a prismatoid with the top removed,
resulting in a polyhedron homeomorphic to a disk called a
\defn{topless prismatoid}.
At CCCG 2013, O'Rourke \cite{ToplessPrismatoids} constructed an edge unfolding
for any topless prismatoid whose faces other than the base are triangles.
Specifically, the edge unfolding has a strong property called
\defn{petal unfolding}, meaning that it does not cut any of edges incident
the base.

A topless prismatoid can be viewed as the local neighborhood of a single face
on an arbitrary convex polyhedron;
indeed, this work extends past petal unfoldings of a single face and
its edge-adjacent faces (``edge-neighborhood patch'')
on a convex polyhedron \cite{pinciuEdge}
and of ``domes'' where all faces except a base share a single vertex
\cite[Section 22.5.2, p.~319]{demaine2007geometric}
(which introduced petal unfoldings as ``volcano unfoldings'').
On the negative side, O'Rourke \cite{ToplessPrismatoids} showed that the
larger neighborhood of faces sharing a vertex with a single face on a
convex polyhedron (``vertex-neighborhood patch'') does not always have
a nonoverlapping petal unfolding.
On the positive side, O'Rourke \cite{ToplessPrismatoids} showed that such
a neighborhood has a nonoverlapping petal unfolding if the base is a triangle
(possibly obtuse) and all other incident faces are nonobtuse triangles.

The latter result also leads to an edge unfolding of prismatoids with both
the top and base, provided the base $B$ is a triangle (possibly obtuse) and
all other faces (including the top $A$) are nonobtuse triangles.
In this setting, the definition of \defn{petal unfolding} extends to
mean that it does not cut any of edges incident to the base~$B$, and cuts all
but one of the edges incident to the top~$A$.
(Thus, in all cases, the side faces unfold by simple rotation around one
edge of the base~$B$.)
O'Rourke \cite{ToplessPrismatoids} in fact showed that \emph{all}
petal unfoldings of such prismatoids avoid overlap.

\subsection{Our Results}

We expand O'Rourke's methods to encompass a broader family of prismatoids,
showing that petal unfoldings never overlap in two new situations.
Our first result is a step toward O'Rourke's conjecture that the base can be
any convex polygon, provided the other faces are nonobtuse triangles:

\begin{restatable}{thm}{rectangularprismatoid}
  \label{thm1}
  For any prismatoid where the base $B$ is a rectangle and
  all other faces are nonobtuse triangles,
  every petal unfolding avoids overlaps.
\end{restatable}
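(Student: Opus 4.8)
The plan is to follow O'Rourke's strategy for the triangular base: track the petal unfolding face by face, confine each petal to an explicit planar region, and then check that these regions, the base, and the unfolded top are pairwise interior-disjoint. Keep the rectangular base $B=b_1b_2b_3b_4$ fixed in the plane. Since the top $A$ is a non-base face it must be a triangle, so the lateral surface is a triangulated annulus: on each base edge $e_i=b_ib_{i+1}$ sits one ``up'' triangle $T_i$ whose apex is a vertex of $A$, and between consecutive up-triangles, at each base vertex, sits a fan of ``down'' triangles with apex that base vertex. A petal unfolding cuts no base edge, so petal $P_i$ consists of $T_i$ together with the down-triangles at $b_i$ and at $b_{i+1}$ that the cut-tree attaches to it (and possibly $A$ itself, hanging off one such down-triangle); it unfolds by a single rotation about $e_i$ into the half-plane $H_i$ beyond the line $\ell_i$ of $e_i$, with the down-triangles at $b_i$ (resp.\ $b_{i+1}$) sweeping out a fan about the fixed point $b_i$ (resp.\ $b_{i+1}$). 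Thus $P_i\subseteq W_i^{b_i}\cup W_i^{b_{i+1}}$, where $W_i^{b}$ is the wedge of apex the endpoint $b$ of $e_i$, with one side along $e_i$, opening into $H_i$ through the angle $\Theta_i^{b}$ that $P_i$ subtends at $b$. The one place the triangle hypothesis bites directly: $T_i$ is nonobtuse, so the foot of its altitude from the apex lies on the segment $e_i$, hence $T_i$ lies in the perpendicular slab over $e_i$; likewise $A$, being nonobtuse, stays in the perpendicular slab over the top edge along which it is attached.

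The local overlaps are handled by convexity. At a base vertex $b$, the two incident petals subtend angles $\Theta'$ and $\Theta''$ there, and $\Theta'+\Theta''+90^\circ$ equals the total face angle of the polyhedron at $b$, which is at most $360^\circ$ by convexity; so $\Theta'+\Theta''\le 270^\circ$. The two corresponding wedges at $b$ open from the two base edges at $b$ into the $270^\circ$ complementary to the rectangle's right-angle corner, so their interiors are disjoint and both miss $B$. This rules out any petal meeting $B$, and rules out the collision between the part of one petal that is a fan about $b$ and the part of the neighbouring petal that is a fan about the same vertex $b$.

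What remains — the main obstacle — is (i) petals rooted on opposite sides of the rectangle, and (ii) the fan about $b_i$ of one petal reaching the fan about an adjacent base vertex of another petal (the situation that also occurs for a triangular base, where every two sides are adjacent). For (i), the decisive feature of a rectangle is that opposite sides are parallel, so $H_i$ and $H_{i+2}$ are disjoint closed half-planes; it then suffices to show that when a petal over-wraps past its own half-plane $H_i$ — which can genuinely occur, since $\Theta_i^{b}$ may exceed $180^\circ$ — the overflow lands in the half-plane $H_{i\pm1}$ of an adjacent side, never in $H_{i+2}$. This follows from $\Theta_i^{b}\le 270^\circ$ together with the right angle at the corner: the overflow is a wedge of width less than $90^\circ$ hinged along a ray of $\ell_i$ that separates $H_i$ from $H_{i\pm1}$. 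For (ii), I would use the slab property of the up-triangles together with the nonobtuseness of the down-triangles to bound how far into a neighbour's territory a petal's fan can intrude, reducing the question to the wedge data already controlled above; this is the technical core, and it is here that the particular geometry of the rectangle (not merely the fact that it is a quadrilateral) must be used. Finally the attached copy of $A$ is confined by its own slab to the cone it opens into at its anchoring down-triangle, which lies inside the host petal's region, so $A$ introduces no new overlap.

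I expect step (ii), and more broadly the verification that the per-petal regions are pairwise disjoint, to be the hard part: naive angle bounds do not suffice — the paper's quadrilateral counterexample shows the method is fragile — so one must simultaneously exploit the nonobtuse constraints on every lateral face and the right angles of the rectangle.
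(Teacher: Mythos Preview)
Your proposal has genuine gaps in both of the cases you isolate. For (i), you claim the overflow of $W_i^{b}$ past $H_i$ ``lands in $H_{i\pm1}$, never in $H_{i+2}$''. The containment in $H_{i\pm1}$ is correct (the overflow wedge at $b_i$ has angle $<90^\circ$ and is hinged along $\ell_i$, so it does not cross $\ell_{i-1}$), but disjointness from $H_{i+2}$ is false for the unbounded wedges you work with: $b_i$ does not lie on $\ell_{i+2}$, so rays in that wedge eventually cross $\ell_{i+2}$. Concretely, with $b_3=(w,h)$ the overflow of $W_3^{b_3}$ consists of rays from $b_3$ into $\{x>w,\ y<h\}$, each of which reaches $y<0$, i.e., enters $H_1$. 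And even setting this aside, $P_i$ and $P_{i+2}$ can both overflow into the \emph{same} adjacent half-plane (e.g., $P_1$ at $b_2$ and $P_3$ at $b_3$ both spill into $H_2$), with nothing in your argument to separate them there. For (ii) you explicitly leave the argument unfinished; as you anticipate, this is where the real work lies, and the wedge/slab data alone do not close it.

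The paper's route sidesteps both issues by using O'Rourke's per-\emph{vertex} regions rather than per-edge petals. At each $b_i$ one has the infinite wedge $V_i$ between the rays $\overrightarrow{b_ia_j},\overrightarrow{b_ia_k}$ along the outer sides of the two adjacent unfolded $B$-triangles, and the \emph{diamond} $D_i\subseteq V_i$ cut off by the perpendiculars to those sides at $a_j,a_k$; nonobtuseness of the $A$-triangles traps them in $D_i$, while $V_i$ additionally hosts the top $A$. O'Rourke's reduction already disposes of $V_i$ versus every $B$-triangle and versus $D_{i\pm1}$, so for a rectangle only the diagonally opposite pair $V_1$ versus $D_3$ remains. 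A short angle-chase, using only the nonobtuseness of $B_2,B_3$ and the right angles at $b_2,b_4$, shows that all four boundary pieces of $D_3$ (the two segments $b_3a_j,b_3a_k$ and the two perpendicular rays) stay in the quarter-plane $S$ cornered at $b_1$ and containing $b_3$, while $V_1$ meets $S$ only at $b_1$. The diamond is essential here: it is the perpendicular cutoff, absent from your wedge-only scheme, that confines the far-corner $A$-triangles to $S$; the unbounded wedge $V_3$ itself does not stay in $S$.
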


Our second result takes a different approach, showing that
``tall'' prismatoids always petal unfold, and thus
thin prismatoids form the remaining hard case:

\begin{restatable}{thm}{tallprismatoids}
  \label{thm2}
  For any prismatoid whose top $A$ and base $B$ are sufficiently far apart,
  every petal unfolding avoids overlaps.
  More precisely, petal unfolding avoids overlap if
  $$z \geq \frac{3\pi P_A + 4d_{AB}}{2 \Delta_B},$$
  where
  \begin{itemize}
  \item $z$ is the distance between the planes containing
    the two bases $A,B$ of the prismatoid;
  \item $P_A$ is the perimeter of the top $A$;
  \item $\Delta_B = \pi - \max_i \angle_B(b_i)$ is the smallest turn angle
    in the base~$B$ (in radians);
    
  \item $A'$ is the projection of $A$ onto the plane of~$B$; and
  \item $d_{AB}$ is the diameter of the region $A' \cup B$.
  \end{itemize}
\end{restatable}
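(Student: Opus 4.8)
The plan is to keep the base $B$ fixed in its plane and show that every petal, once rotated out around its base edge, is confined to a thin angular sector pointing away from $B$, with the sectors for distinct base edges pairwise disjoint. First I would record the elementary fact that the outward unit normals $n_1,\dots,n_n$ of the base edges $e_1,\dots,e_n$ turn monotonically as one walks around the convex polygon $B$, and that the angle between the normals of two edges meeting at a vertex $b_i$ is exactly the turn angle $\pi-\angle_B(b_i)\ge\Delta_B$. Thus the normal directions cut the plane outside $B$ into wedges whose widths are the turn angles and which sum to $2\pi$, and the whole argument reduces to showing that each petal stays within half of the smallest such wedge on either side of its own normal.

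The heart of the proof is a \emph{spread lemma} bounding the angular half-width of a petal. For the petal rooted at $e_i$ I would measure the largest angle any of its points makes with $n_i$ along the development, and claim this half-width $\beta$ satisfies $\beta\le\frac{3\pi P_A+4d_{AB}}{4z}$. Two effects contribute. First, when the face incident to $e_i$ is rotated into the plane it extends into the half-plane on the $n_i$ side, but the remaining faces of the chain are laid down by rotating about shared edges, so the petal can curve; I would bound the total turning of each lateral boundary by accumulating the exterior angles along the chain and across the top edges it carries, charging this curvature to the perimeter $P_A$ of the convex top $A$ (this is where the factor proportional to $\pi P_A$ arises, since $A$ contributes total turning $2\pi$ apportioned among the petals, with additional half-turns absorbed along the two sides). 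Second, the relevant base vertices are laterally offset from each other by at most the diameter, and the far endpoints of the lateral boundaries are vertices of $A$ whose horizontal displacement from $B$ is at most $d_{AB}$ while their height is $z$; since each lateral boundary has length at least the slant height $\ge z$, such an endpoint deviates from the radial line by an angle at most $\arctan(d_{AB}/z)\le d_{AB}/z$, yielding the term proportional to $d_{AB}$. A total lateral displacement of order $P_A+d_{AB}$ at radial distance at least $z$ then gives the stated bound on $\beta$ after collecting constants.

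With the spread lemma in hand the conclusion is immediate: the hypothesis $z\ge\frac{3\pi P_A+4d_{AB}}{2\Delta_B}$ is exactly $2\beta\le\Delta_B$, so each petal occupies at most its half-wedge on either side of $n_i$, and two petals meeting at a vertex $b_i$ together occupy at most $2\beta\le\Delta_B\le\pi-\angle_B(b_i)$, the entire gap available there; hence they cannot overlap, and since all petals point strictly outward they cannot meet the base either. Finally the top $A$, attached to exactly one petal and lying beyond its far boundary, sits inside that same thin sector (its diameter is at most $d_{AB}$, negligible against the radial length $\ge z$), so it introduces no further overlap.

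I expect the spread lemma to be the main obstacle. The geometric statement is intuitive—tall, nearly vertical faces unfold to a nearly radial strip—but the curvature bookkeeping is delicate: a single petal may contain several faces, and the convex top injects its own $2\pi$ of turning that must be charged to the lateral boundaries and the carried top edges without double counting, which is precisely what forces the constants to collapse to $3\pi P_A$ and $4d_{AB}$. Getting that accounting right, rather than the final combination with $\Delta_B$, is where the real work lies.
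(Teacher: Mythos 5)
Your overall strategy is the same as the paper's: fix $B$ in the plane, confine the petal rooted at each base edge to a sector of angular half-width at most $\Delta_B/2$ around that edge (the paper's regions $S_i$, bounded by the exterior-angle bisectors $\overrightarrow{M_i}$), and observe that these sectors are pairwise disjoint. But your central ``spread lemma'' is asserted rather than proved, and the mechanism you propose for it --- accumulating turning along the lateral boundaries of the petal and charging the convex top's $2\pi$ of curvature to the chain, which you yourself flag as the delicate, unfinished bookkeeping --- misreads the geometry and is much harder than what is actually needed. In a petal unfolding, all $A$-triangles lying between two consecutive $B$-triangles share the single base vertex $b_i$, so every hinge edge in the development passes through $b_i$ and the unfolded faces form a \emph{fan} about that one point. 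Consequently the half-width is not a curvature integral along a curving chain but a plain sum of angles at $b_i$: the apex contribution $\angle b_{i+1}b_ia_j \le \tfrac{\pi}{2} + \arctan(d_{AB}/z) \le \tfrac{\pi}{2} + d_{AB}/z$ (by projecting $a_j$ onto the line $b_ib_{i+1}$, using that the horizontal offset is at most $d_{AB}$ while the in-plane leg is at least $z$), plus the fan angles $\angle a_jb_ia_{j+1} \le \arcsin(a_ja_{j+1}/z) \le \tfrac{\pi}{2}\,a_ja_{j+1}/z$ by the Law of Sines with $b_ia_j \ge z$, which sum to at most $\tfrac{\pi P_A}{2z}$. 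No apportioning of the top's $2\pi$ among petals is needed or even meaningful here, and there is no double-counting issue to resolve.

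A second genuine gap is your treatment of the top face $A$. You claim the hypothesis is ``exactly $2\beta \le \Delta_B$'' and then dismiss $A$ as having diameter at most $d_{AB}$, ``negligible against the radial length.'' This is not quantitative and in fact cannot be folded into a purely angular condition: $A$ is attached along a single edge at some vertex $a_j'$, and it can protrude from the fan by up to its diameter, which is a \emph{length}, not an angle. The paper handles this by deliberately not spending the whole angular budget: it reserves slack $\tfrac{\Delta_B}{6}(1-\ell)$ (splitting the budget via a parameter $0<\ell<1$ between the $d_{AB}$-term and the $P_A$-term) so that the distance from $a_j'$ to the bounding ray $\overrightarrow{M_i}$ is at least $z\cdot\tfrac{2}{\pi}\cdot\tfrac{\Delta_B}{6}(1-\ell) \ge \tfrac{P_A}{2} \ge \mathrm{diam}(A)$, using $\sin x \ge \tfrac{2x}{\pi}$; the theorem's constants come precisely from optimizing $\ell$ in this metric buffer. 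Without such a reserved buffer, your final step does not rule out $A$ crossing the sector boundary, so as written the argument does not close even if your spread lemma were granted.
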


We prove these theorems in Sections~\ref{sec:thm1} and~\ref{sec:thm2}
respectively, after covering the relevant background from
\cite{ToplessPrismatoids} in Section~\ref{sec:bg}.
In Section~\ref{sec:quads}, we give counterexamples for
extending our technique of Theorem~\ref{thm1} to general quadrilaterals.

\section{Background}
\label{sec:bg}

We follow the notation given in O'Rourke's paper \cite{ToplessPrismatoids}.
Let $A$ and $B$ be the top and base of the prismatoid, respectively.
Let $a_1, a_2, \ldots, a_m$ and $b_1, b_2, \ldots, b_n$
be the vertices of $A$ and $B$ respectively.
Let $B_i$ be the triangle with one vertex on $A$ and two vertices at $b_i$ and $b_{i + 1}$, where indices are treated modulo~$n$.
Call these triangles \defn{$B$-triangles},
and define \defn{$A$-triangles} similarly.

Consider two consecutive $B$-triangles $B_{i-1} = b_{i - 1} b_i a_j$ and $B_i = b_i b_{i + 1} a_k$ in the unfolding, as in \figurename~\ref{fig:diamond_def}.
Define a diamond region $D_i$ bounded by line segments $b_i a_j$ and $b_i a_k$, and by the rays through $a_j$ and $a_k$ perpendicular to $b_ia_j$ and $b_ia_k$ respectively.
Because all the $A$-triangles are nonobtuse, all the $A$-triangles attached to edges $b_ia_j$ or $b_ia_k$ stay within the region $D_i$.

\begin{figure}
    \centering
    \includegraphics[width=0.45\textwidth]{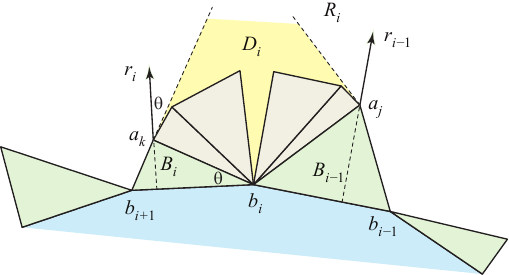}
    \caption{The diamond region $D_i$ and the $A$-triangles it contains.
      [Based on Figure 12(a) of \cite{ToplessPrismatoids}, used with permission.]}
    \label{fig:diamond_def}
\end{figure}

Define a larger wedge region $V_i$ bounded by rays $\overrightarrow{b_ia_j}$ and $\overrightarrow{b_ia_k}$ (and disjoint from $B$, $B_{i-1}$, and $B_i$),
as shown in \figurename~\ref{fig:v_def}.
Wedge $V_i$ contains all the $A$-triangles attached to $b_ia_j$ or $b_ia_k$, as well as the top $A$, should it be attached to one of these $A$-triangles.

\begin{figure}
    \centering
    \includegraphics[width=0.45\textwidth]{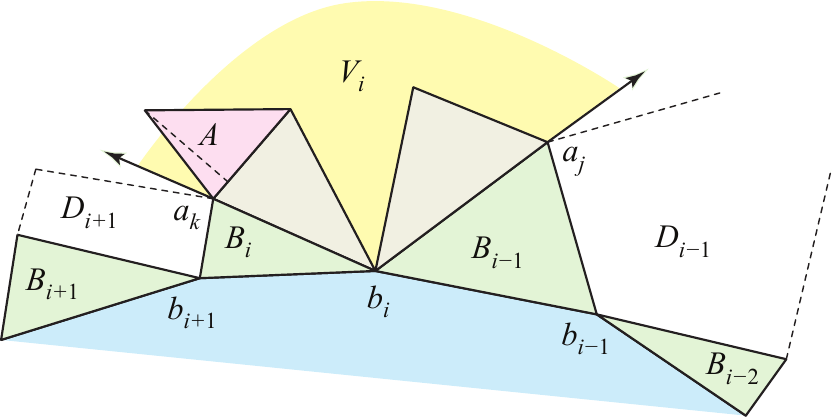}
    \caption{The region $V_i$ containing $A$-triangles and the top $A$.
      [Based on Figure 13 of \cite{ToplessPrismatoids}, used with permission.]}
    \label{fig:v_def}
\end{figure}

%
%
%

\section{Unfolding Rectangular-Base Prismatoids \\ (Proof of Theorem \ref{thm1})}
\label{sec:thm1}

O'Rourke \cite{ToplessPrismatoids} showed that petal unfoldings never overlap for prismatoids with a convex base $B$ and all other faces nonobtuse triangles \emph{provided} that the region $V_i$ does not intersect any $B$-triangles or any diamonds $D_j$ for $j \neq i$ (which contain all other $A$-triangles).
He showed that this property holds when the base $B$ is a triangle (possibly obtuse).
We extend this result to include the case where $B$ is a rectangle,
as in Figure~\ref{fig:rectangle_3d}.

\begin{figure}[!h]
    \centering
    \includegraphics[width=0.4\textwidth]{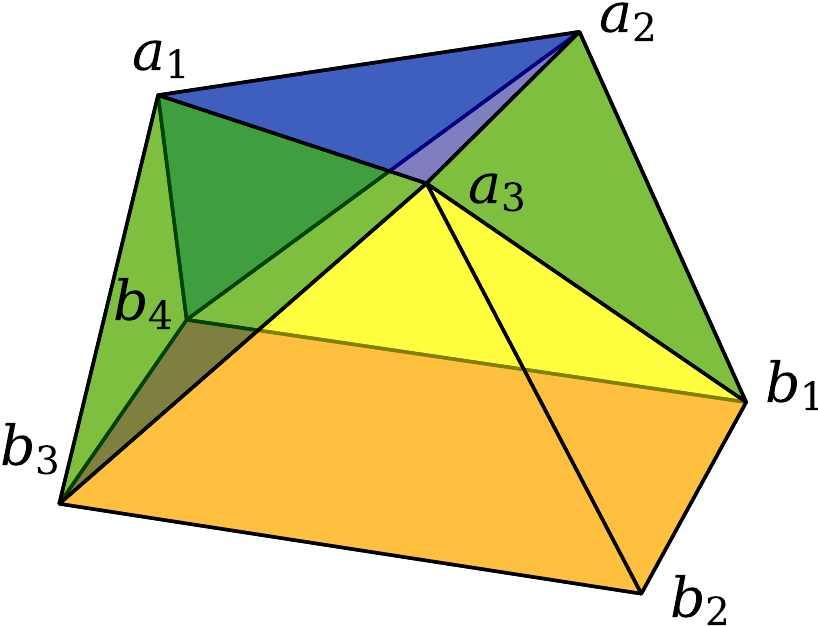}
    \caption{An acutely triangular prismatoid with a rectangular base.}
    \label{fig:rectangle_3d}
\end{figure}


\rectangularprismatoid*

\begin{proof}
    O'Rourke \cite{ToplessPrismatoids} showed that it suffices to prove that $V_i$ does not intersect any $B$-triangles or any diamonds $D_j$ for $j \neq i$.
    He already showed that $V_i$ does not intersect $B_j$, for $i - 2 \leq j \leq i + 1$.
    For a rectangle, this covers all four $B$-triangles.
    Because $B_i$ and $B_{i + 1}$ are acute, the rays bounding $D_{i + 1}$ and $D_{i - 1}$ lie strictly outside $V_i$, so $V_i$ cannot intersect those diamonds.
    Thus, all that remains is to show that $V_i$ does not intersect $D_{i + 2}$.
    
    By symmetry, it suffices to show that $V_1$ does not intersect $D_3$,
    as shown in \figurename~\ref{fig:rectangle_nonintersection}. In fact, we claim that $D_3$ is contained within the region $S$ bounded by rays $\overrightarrow{b_1b_2}$ and $\overrightarrow{b_1b_4}$ containing $b_3$.
    We will show that the line segments and rays bounding $D_3$ never leave~$S$.
    
    \begin{figure}[h!]
        \centering
        \includegraphics[width=0.4\textwidth]{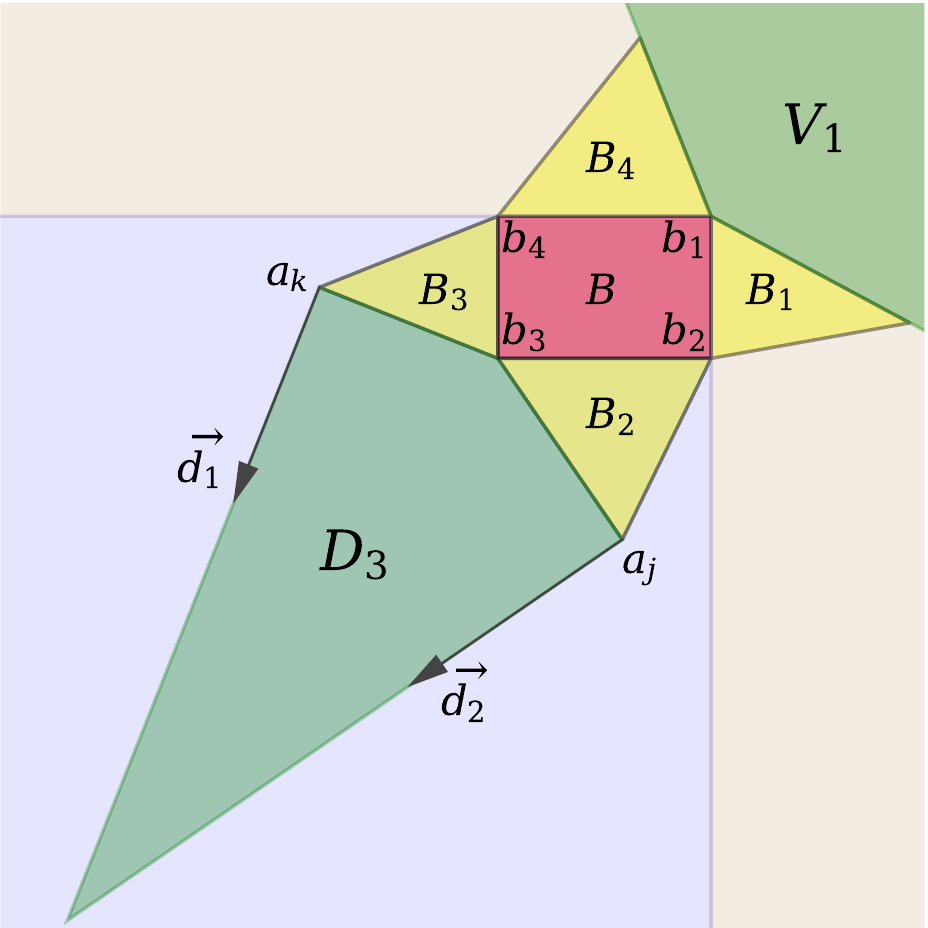}


        \caption{The regions $V_1$ and $D_3$ for the rectangular prismatoid in \figurename~\ref{fig:rectangle_3d}. Note that $D_3$ always lies in the lower left quarter-plane, and $V_1$ always lies in the remaining three quarters of the plane.}
        \label{fig:rectangle_nonintersection}
    \end{figure}
    
    Let $a_j$ and $a_k$ be the apices of triangles $B_2$ and $B_3$, so $D_3$ is bounded by the line segments $b_3a_k$ and $b_3a_j$, and by the rays $\overrightarrow{d_1}$ and $\overrightarrow{d_2}$ perpendicular to $b_3a_k$ and $b_3a_j$ at $a_k$ and $a_j$ respectively.
    
    First, if $b_3a_k$ intersected line $b_1b_4$, then $\angle b_3b_4a_k$ of $B_3$ would be obtuse. Also, $b_3a_k$ cannot intersect ray $b_1b_2$, as it is on the wrong side of line $b_3b_4$. Thus, $b_3a_k$ is contained in $S$. Similarly, $b_3a_j$ is contained in~$S$.
    
    Now consider ray $\overrightarrow{d_1}$. Suppose it intersected $\overrightarrow{b_1b_2}$ at some point $x$. Then, in quadrilateral $b_2b_3a_kx$, we have $\angle b_3a_kx = \angle xb_2b_3 = 90^{\circ}$, meaning $\angle b_2b_3a_k = 180^{\circ} - \angle a_kxb_2 < 180^{\circ}$. However, this would make $\angle b_4b_3a_k = 360^{\circ} - \angle b_2b_3b_4 - \angle b_2b_3a_k > 90^{\circ}$, contradicting $B_3$ being nonobtuse. 
    
    Similarly, suppose that $\overrightarrow{d_1}$ intersects $\overrightarrow{b_1b_4}$ at some point $y$. Then, in triangle $ya_kb_4$, we have $\angle ya_kb_4 < 180^{\circ}$, so $\angle b_4a_kb_3 = 360^{\circ} - \angle ya_kb_4 - \angle b_3a_ky > 360^{\circ} - 180^{\circ} - 90^{\circ} = 90^{\circ}$, contradicting the assumption that $B_3$ is nonobtuse. Hence, $\overrightarrow{d_1}$ never intersects $\overrightarrow{b_1b_2}$ or $\overrightarrow{b_1b_4}$, and is thus contained in~$S$. A similar argument shows that $\overrightarrow{d_2}$ is contained in~$S$.
    
    Finally, we show that $V_1$ intersects $S$ only at point $b_1$. This claim holds because the two rays bounding $V_1$ only ever intersect $\overrightarrow{b_1b_2}$ and $\overrightarrow{b_1b_4}$ at $b_1$. Therefore, all petal unfoldings do not overlap.
\end{proof}

\subsection{Difficulty of Quadrilateral Bases}
\label{sec:quads}

It is natural to hope that Theorem~\ref{thm1} can be extended to all quadrilateral bases, or any convex base. However, our technique above relies on the fact that each angle of $B$ is nonobtuse. Specifically, showing that $V_i$ and $D_{i + 2}$ do not intersect requires the assumption that $\angle b_1b_2b_3 \leq 90^{\circ}$, and $\angle b_1b_4b_3 \leq 90^{\circ}$. Every angle of polygon $B$ is nonobtuse only when $B$ is a rectangle or a nonobtuse triangle, so other quadrilaterals will require a more careful treatment.

Furthermore, the prismatoid $\mathcal{P}_c$, shown in
\figurename~\ref{fig:counterexample_3d} and coordinatized in
Table~\ref{tab:counterexample_coordinates}, is counterexample to the
conjecture that the regions do not overlap when $B$ is a general quadrilateral.
\figurename~\ref{fig:counterexample_regions} shows the overlap.

\begin{figure}[!h]
    \centering
    \includegraphics[width=0.45\textwidth]{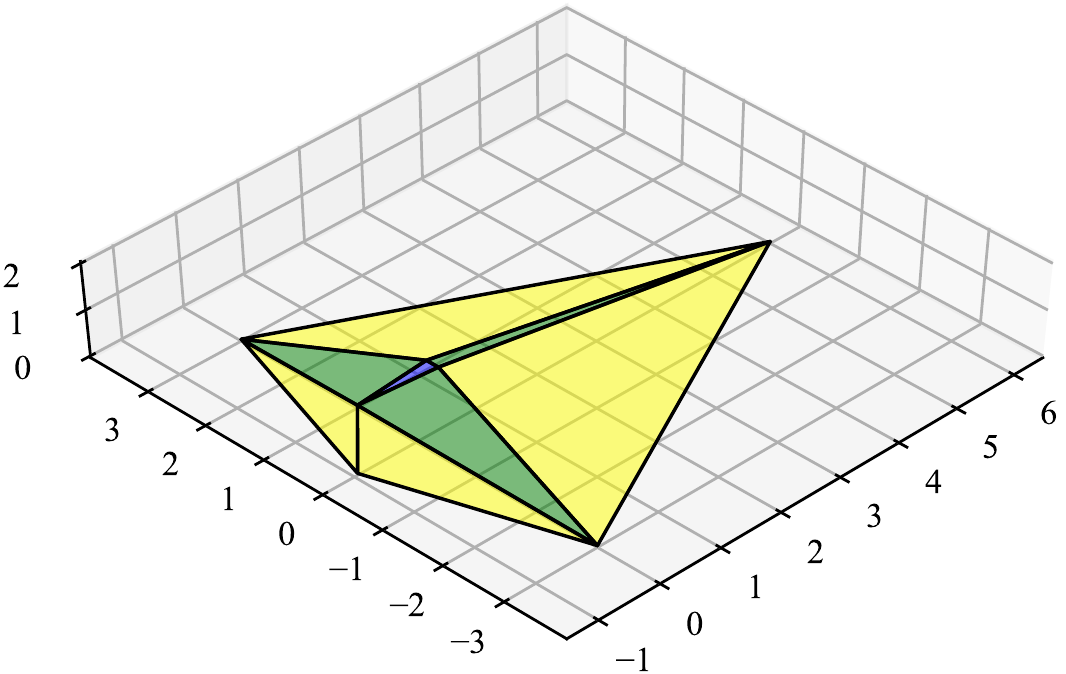}
    \caption{Prismatoid $\mathcal{P}_c$ has a quadrilateral base and all other faces nonobtuse triangles. The largest angle among the triangular faces is $89.7^{\circ}$.}
    \label{fig:counterexample_3d}
\end{figure}

\begin{figure}[!h]
    \centering
    \includegraphics[width=0.4\textwidth]{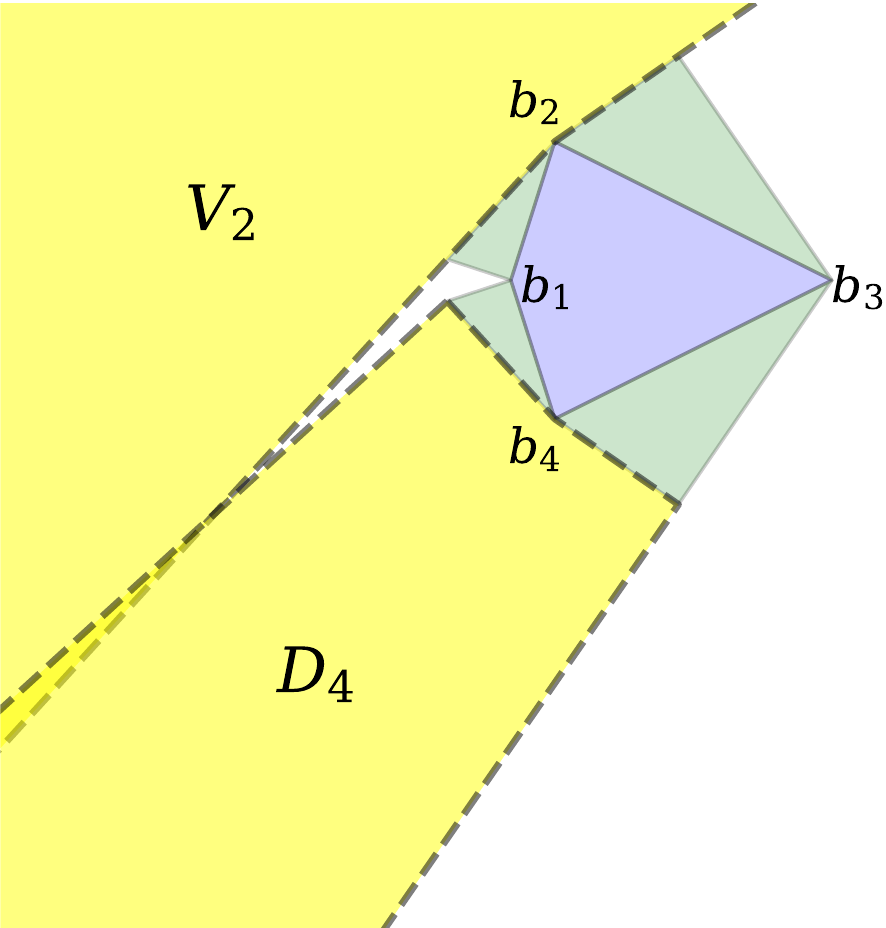}
    \caption{The regions $V_2$ and $D_4$ intersect.}
    \label{fig:counterexample_regions}
\end{figure}

The points of this prismatoid can be moved so that the base $B$ is cyclic (vertices lie on a common circle), forming a new prismatoid $\mathcal{P}_{cyc}$
with coordinates given by Table~\ref{tab:cyclic_coordinates}.
To find the coordinates of $\mathcal{P}_{cyc}$, we used a gradient descent method to minimize $\left|\angle b_1b_2b_3 - 90^{\circ}\right|$ while maintaining that all triangles are nonobtuse.
The overlap of the regions in $\mathcal{P}_{cyc}$ is much more difficult to
see (refer to Figure~\ref{fig:counterexample_regions}):
the angle formed at the intersection point is less than $0.003^{\circ}$.

These examples mean that extending the proof of Theorem~\ref{thm1}, even to just cyclic quadrilaterals, requires a more precise treatment than considering the regions $V_i$ and $D_i$.
On the other hand, all petal unfoldings of $\mathcal{P}_c$ and $\mathcal{P}_{cyc}$ have no overlap, so O'Rourke's conjecture about petal unfoldings with an arbitrary convex base remains plausible.

\begin{table}
    \centering
    \begin{tabular}{c|ccc}
            Point(s) & \multicolumn{3}{c}{Coordinates} \\
            \hline
            $b_1$ & ($-0.95,$& 0.00,& 0.00) \\
            $b_2, b_4$ & (0.00,& $\pm 3.00,$& 0.00) \\
            $b_3$ & (6.00,& 0.00,& 0.00) \\
            $a_1$ & ($-0.90,$& 0.00,& 1.45) \\
            $a_2, a_3$ & (0.30,& $\pm 0.10,$& 1.45)
        \end{tabular}
    \caption{The coordinates of the vertices of $\mathcal{P}_c$.}
    \label{tab:counterexample_coordinates}
\end{table}

\begin{table}
    \centering
    \begin{tabular}{c|ccc}
            Point(s) & \multicolumn{3}{c}{Coordinates} \\
            \hline
            $b_1$ & ($-1.5633,$ & 0.0000, & 0.0000) \\
            $b_2, b_4$ & (0.0000, & $\pm 3.7169,$ & 0.0000) \\
            $b_3$ & (8.8372, & 0.0000, & 0.0000) \\
            $a_1$ & ($-1.5581,$ & 0.0000, & 1.6435) \\
            $a_2, a_3$ & (0.2225, & $\pm 0.0299,$ & 1.6435)
        \end{tabular}
    \caption{The coordinates of the vertices of $\mathcal{P}_{cyc}$.}
    \label{tab:cyclic_coordinates}
\end{table}

\section{Unfolding Tall Prismatoids \\ (Proof of Theorem \ref{thm2})}
\label{sec:thm2}


For a given prismatoid, let $z$ denote the distance between the planes of the top and base. We show that, for prismatoids with large enough $z$, all petal unfoldings avoid overlap.

\tallprismatoids*

\begin{figure}
        \centering
        \includegraphics[width=0.45\textwidth]{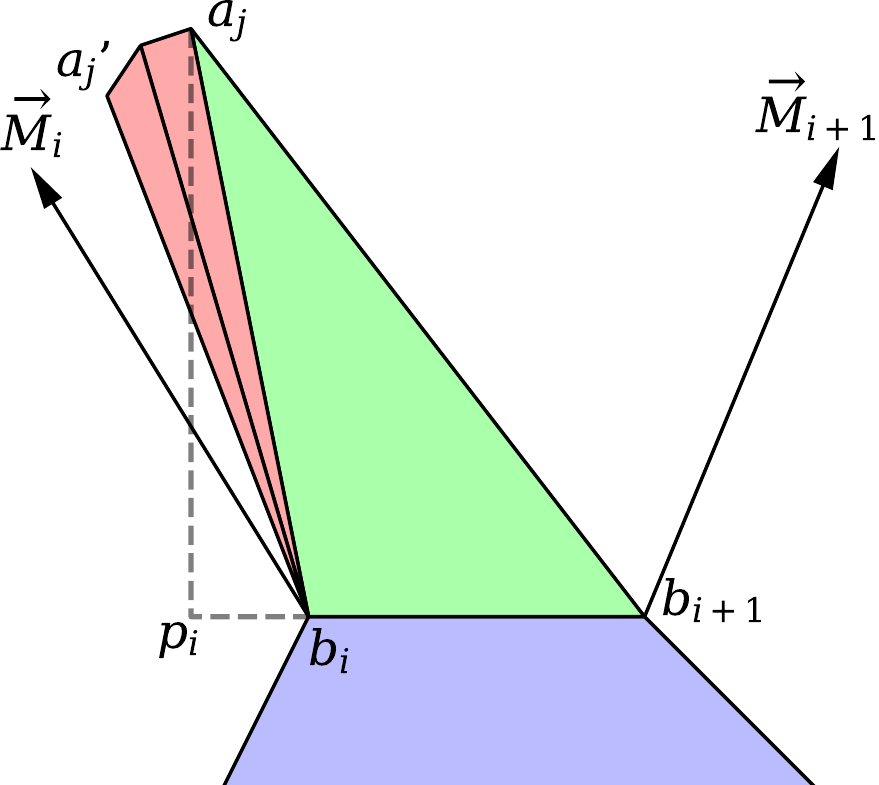}
        \caption{One of the $B$-triangles, along with some $A$-triangles attached to its left. In this configuration, $p_i$ is on the opposite side of $b_i$ as $b_{i + 1}$, so $\angle b_{i + 1}b_ia_j$ is obtuse.}
        \label{fig:tall_diagram}
\end{figure}

\begin{proof}
    We show that, in any petal unfolding, every face that gets attached to a $B$-face $B_i$ will stay in a region $S_i$ bounded by the edge $b_i b_{i + 1}$ and the rays $\overrightarrow{M_i}$ and $\overrightarrow{M_{i + 1}}$ bisecting the exterior angles of $B$ at $b_i$ and $b_{i + 1}$ respectively,
    as shown in \figurename~\ref{fig:tall_diagram}.
    Note that the angle between edge $b_ib_{i + 1}$ and $\overrightarrow{M_i}$ is at least $\frac{\pi}{2} + \frac{\Delta_B}{2}$, and every edge of the form $b_ia_j$ has length at least $z$.

    \medskip

    Let $0 < \ell < 1$ be a constant.
    Consider a $B$-face $B_i$ with vertices $b_ib_{i + 1}a_j.$
    First we claim that the angle $\angle b_{i + 1}b_ia_j$ will be at most $\frac{\pi}{2} + \frac{\Delta_B}{2}\cdot \ell$, as long as $z \geq \frac{2d_{AB}}{\Delta_B \ell}$.

    Consider the projection $p_i$ of $a_j$ onto $b_ib_{i + 1}$. If it lies on the same side of $b_i$ as $b_{i + 1}$, then $\angle b_{i + 1}b_ia_j$ is acute, and we are done. Otherwise, the angle is obtuse, but we can use the fact that the length of $b_ip_i$ is at most $d_{AB}$.

    In this case, we know $\angle b_{i + 1}b_ia_j = \frac{\pi}{2} + \arctan \frac{b_ip_i}{p_ia_j}$. Also $p_ia_j \geq z$, so $$\angle b_{i + 1}b_ia_j \leq \frac{\pi}{2} + \arctan \frac{d_{AB}}{z} \leq \frac{\pi}{2} + \frac{d_{AB}}{z}.$$ Substituting in our assumption that $z \geq \frac{2d_{AB}}{\Delta_B \ell}$, we get that $\angle b_{i + 1}b_ia_j \leq \frac{\pi}{2} + \frac{\Delta_B}{2} \cdot \ell$, as desired.
    
    \medskip

    Second, we show that, as long as $z \geq \frac{3\pi}{2 \Delta_B}P_A \cdot \frac{1}{1 - \ell}$, the angle $\angle a_jb_ia_j'$ subtended by the $A$-triangles attached to edge $a_jb_i$ is at most $\frac{\Delta_B}{3} (1 - \ell)$. We start by bounding the measure of $\angle a_jb_ia_{j + 1}$ for any edge $a_ja_{j + 1}$ of $A$.
    By the Law of Sines, $\frac{\sin \angle a_jb_ia_{j + 1}}{a_j a_{j + 1}} = \frac{\sin \angle a_ja_{j + 1}b_i}{a_jb_i}$, so
    \begin{align*}
        \angle a_jb_ia_{j + 1} &= \arcsin \frac{a_ja_{j + 1} \sin \angle a_ja_{j + 1}b_i}{a_jb_i} \\
        & \leq \arcsin \frac{a_ja_{j + 1}}{z}.
    \end{align*}
    Because $\arcsin x \leq \frac{\pi}{2}x$ for $x \geq 0$, we obtain $\angle a_jb_ia_{j + 1} \leq \frac{\pi}{2} \cdot \frac{a_ja_{j + 1}}{z}$.
    
    The sum of these lengths $a_ja_{j + 1}$ over all $A$-triangles is $P_A$, so the sum of the angles over all $A$-triangles is at most $\frac{\pi P_A}{2z}$. Because the angle $\angle a_jb_ia_j'$ is the sum of $\angle a_jb_ia_{j + 1}$ over some subset of the edges $a_ja_{j + 1}$ of $A$, we can substitute $z \geq \frac{3\pi}{2 \Delta_B}P_A \cdot \frac{1}{1 - \ell}$ to get that $\angle a_jb_ia_j' \leq \frac{\Delta_B}{3} (1 - \ell)$.
    
    \medskip

    Third, we show that, if $z \geq \min\left(\frac{2d}{\Delta_B \ell}, \frac{3\pi}{2 \Delta_B}P_A \cdot \frac{1}{1 - \ell}\right)$, then no matter where the top face $A$ is attached in the unfolding, it will not exit the region $S_i$. We accomplish this by proving that the shortest distance $d_{\min}$ between the point $a_j'$ and the ray $\overrightarrow{M_i}$ is at least $\frac{P_A}{2}$. By the triangle inequality, this means that $A$ cannot intersect $\overrightarrow{M_i}$. Note that this shortest distance is $d_{\min} = b_ia_j' \sin \left(\frac{\pi}{2} + \frac{\Delta_B}{2} - \angle b_{i + 1}b_ia_j - \angle a_jb_ia_j'\right)$.
    
    We know that $b_ia_j' \geq z$, and from our previous results, we know that
    \begin{align*}
        & \, \frac{\pi}{2} + \frac{\Delta_B}{2} - \angle b_{i + 1}b_ia_j - \angle a_jb_ia_j' \\
        \geq &\, \frac{\pi}{2} + \frac{\Delta_B}{2} - \frac{\pi}{2} - \frac{\Delta_B}{2} \cdot \ell - \frac{\Delta_B}{3} (1 - \ell) \\
        = &\, \frac{\Delta_B}{6}(1 - \ell).
    \end{align*}
    Using the fact that $\sin x \geq \frac{2x}{\pi}$ for $0 \leq x \leq \frac{\pi}{2}$, we obtain $$d_{\min} \geq \frac{3\pi}{2\Delta_B}P_A \cdot \frac{1}{1 - \ell} \cdot \frac{2}{\pi} \cdot \frac{\Delta_B}{6}(1 - \ell) = \frac{P_A}{2},$$ as desired.
    
    Repeating this argument for every side $b_ia_j$ of every $B$-triangle, we obtain that, if $$z \geq \min\left(\frac{2d}{\Delta_B \ell}, \frac{3\pi}{2 \Delta_B}P_A \cdot \frac{1}{1 - \ell}\right),$$ then no petal unfolding of $\mathcal{P}$ can overlap. This lower bound is minimized when the two inputs to the $\min$ are equal. This occurs when $\ell = \frac{4d}{4\pi P_A + 4d}$, which when substituted yields the desired $z \geq \frac{3\pi P_A + 4d_{AB}}{2 \Delta B}$.
\end{proof}

The most room for improvement in this proof is the second step's bound $z \geq \frac{3\pi}{2\Delta_B}P_A \cdot \frac{1}{1 - \ell}$, as it is impossible for all the $A$-triangles to be attached to a single point on $A$.

\section*{Acknowledgments}   
This work began as a final project in an MIT class on Geometric Folding Algorithms (6.849, Fall 2020). We thank Professor Joseph O'Rourke for his guidance on this project. We also thank Yevhenii Diomidov for helpful suggestions.


\small




\bibliographystyle{plain}
\bibliography{references.bib}


\end{document}